\theoremstyle{plain}
\newtheorem{theorem}{Theorem}
\newtheorem{lemma}[theorem]{Lemma}
\newtheorem{construction}[theorem]{Construction}
\theoremstyle{definition}
\newtheorem{definition}[theorem]{Definition}
\theoremstyle{remark}
\newtheorem{remark}[theorem]{Remark}
\newtheorem{example}[theorem]{Example}
\newcommand{\bF}{\mathbb{F}}
\newcommand{\bfa}{\boldsymbol{a}}
\newcommand{\bfx}{\boldsymbol{x}} 
\newcommand{\bfy}{\boldsymbol{y}} 
\newcommand{\cS}{\mathcal{S}}
\DeclareMathOperator{\vts}{VTS}
\DeclareMathOperator{\wt}{wt}
\author[1]{Michael Schaller}
\author[2]{Beatrice Toesca}
\author[3]{Van Khu Vu}
\affil[1,2]{University of Zurich}
\affil[3]{National University of Singapore}
    \renewcommand\AB@affilsepx{: \protect\Affilfont}
    \affil[ ]{Email}
    \renewcommand\AB@affilsepx{, \protect\Affilfont}
    \affil[1]{michael.schaller@math.uzh.ch}
    \affil[2]{beatrice.toesca@math.uzh.ch}
    \affil[3]{isevvk@nus.edu.sg}
\date{}
\title{A New Construction of Non-Binary\\ Deletion Correcting Codes and their Decoding}
\begin{document}

\maketitle

\begin{abstract}
    Non-binary codes correcting multiple deletions have recently attracted a lot of attention.
    In this work, we focus on multiplicity-free codes, a family of non-binary codes where all symbols are distinct.
    Our main contribution is a new explicit construction of such codes, based on set and permutation codes.
    We show that our multiplicity-free codes can correct multiple deletions and provide a decoding algorithm.
    We also show that, for a certain regime of parameters, our constructed codes have size larger than all the previously known non-binary codes correcting multiple deletions.
\end{abstract}

\section{Introduction}\label{sec:intro}
Codes correcting synchronization errors have a long history from the seminal paper by Levenshtein in 1966 \cite{lcvenshtcin1966binary}, where a binary code correcting a single deletion was presented.
Levenshtein already established an interesting connection between codes over multiplicity-free sets and ordered partial Steiner systems \cite{levenshtein1992perfect}.
In 1984, Tenengolts presented the first construction of a non-binary code correcting a single deletion \cite{tenengolts1984nonbinary}. 

From a practical point of view, non-binary deletion correcting codes have applications in DNA-based storage systems and racetrack memories.
In DNA-based storage systems, quaternary codes correcting multiple deletions are required \cite{cai2021correcting}.
In racetrack memories, the non-binary deletion correcting codes have a large alphabet size depending on the number of read-heads in the racetrack memory. \cite{chee2018codes,chee2021locally}. 
For example, in order to correct over-shift errors in racetrack memories, a non-binary code correcting multiple blocks of deletions was proposed \cite{chee2021locally}. 

Recently, there were numerous breakthrough results both for binary \cite{gabrys2018codes,sima2019two,sima2020optimalbinary} and 
non-binary \cite{haeupler2021synchronization, liu2022bounds,sima2020optimalqary, song2023non, con2023reed, heckel2019characterization, chee2018coding} deletion correcting codes.
Other papers consider the case that there is a given number of deletions \cite{sima2020optimalqary, liu2022bounds} or that there is a large number of deletions \cite{haeupler2021synchronization, liu20212}.
\vspace*{1mm}

To explain better some of these results and our contribution to the paper, we introduce some notation.
Let $q,n$ be two integers, $\Sigma_q=\{0,1,\ldots,q-1\}$ be the alphabet of size $q$ and $[n]$ be the set $\{1,2,\ldots,n\}$.
Let $\Sigma_q^n$ be the set of all $q$-ary sequences of length $n.$
For $n\leq q$, a sequence $\bfx=(x_1,\ldots,x_n) \in \Sigma_q^n$ is called a \textit{multiplicity-free sequence} of length $n$ in the alphabet $\Sigma_q$ if $x_i \neq x_j$ for any given pair $i \neq j$.
We denote by $M_q^n$ the set of all multiplicity-free sequences of length $n$ in $\Sigma_q$.

Let $C(q,n,t)$ denote a $q$-ary code of length $n$ that can correct $t$ deletions.
We call  $\log_2 |\Sigma_q^n| - \log_2 |C(q, n, t)|$ the \textit{bit redundancy} or \textit{redundancy} of the code $C(q,n,t)$.
\vspace*{2mm}

The work in \cite{sima2020optimalbinary,chee2021two, con2023reed, liu2022bounds} shows that there are a few families of non-binary codes correcting multiple deletions approaching the Singleton bound for some regime of parameters.
In \cite{chee2021two, HaeuplerS21}, the authors used a sequence of length $n$ of small alphabet size to locate all deletion errors and an erasure correcting code to correct them.
In total, they need at least $t\log q + \Theta(n)$ bits of redundancy to construct a non-binary code that can correct $t$ errors. In \cite{liu2022bounds}, when the alphabet size $q$ is exponential in the code length $n$, the authors presented a construction of codes approaching the Singleton bound. 

To the best of our knowledge, in the regime in which the alphabet size $q$ is a power of $n$, there is no known construction arbitrarily close to the Singleton bound for fixed $t$.

For any $q>n$, there is a construction of $q$-ary codes correcting $t$ deletions with at most $30t \log q$ bits of redundancy \cite{sima2020optimalqary}.
In the recent preprint \cite{hagiwara2024pointer}, Hagiwara and Vu provided a construction of codes over multiplicity-free sets with an efficient encoding algorithm.
They obtained a family of $q$-ary codes of length $n$ correcting $t$ deletions with at most $5t \log q$ bits of redundancy.

Our objective is to construct a new class of $q$-ary codes of length $n$ correcting $t$ deletions for any $q>n$.
The main result provides a new construction of codes over multiplicity-free sets based on results in permutation codes and set codes.
In the case of large $n$ and $q> n^{2 + \varepsilon}$ for arbitrary positive $\varepsilon$, our codes have redundancy at most $t \log(q) + (3t - 1) \log(n) + \delta t + O\left(\frac{1}{n^{\varepsilon}} \right)$ for arbitrary $\delta > 0$.

The construction we derive is based on a decomposition of a multiplicity-free sequence into a permutation and a set of symbols.
We observe that if there is a set code that can correct multiple deletions and a permutation code that can correct multiple unstable deletions, we can obtain a multiplicity-free code that can correct multiple deletions.
The idea also works for stable deletions in permutation codes.
\vspace*{2mm}

The paper is structured as follows:
In the next section, we review the results on set codes and binary constant weight codes. 
In Section \ref{sec.perm_codes}, we provide definitions and recent results on permutation codes that can correct multiple stable/unstable deletions.
In Section \ref{sec:mult-free_codes}, we provide the main results of the paper by combining permutation codes and set codes to obtain a new family of multiplicity-free codes that can correct multiple deletions.
In Section \ref{subsec:analysis}, we analyze the size of our codes and compare it with previous results in the literature.

\section{Set Codes and Binary Constant-Weight Codes}

\subsection{Set Codes} \label{sec:set_codes}
A set $A \subset \Sigma_q$ is called an \textit{$n$-subset} of the alphabet $\Sigma_q$ if it has $n$ elements, that is $|A|=n$. We denote with $\binom{\Sigma_q}{n}$ the set of all $n$-subsets of $\Sigma_q$.
In the following, we introduce the notion of \textit{set codes}, which will play a crucial role in our construction of non-binary deletion correcting codes.

\begin{definition}
    Given an alphabet $\Sigma_q$, a \textit{set code} of length $n$ is a family $C_S(q,n) \subseteq \binom{\Sigma_q}{n}$.
\end{definition}
Note that a set code can also be interpreted as an $n$-uniform hypergraph, where each edge corresponds to a codeword of the set code.
\begin{definition}
    A set code $C_S(q,n)$ is said to \textit{correct $t$ deletions} if there are no two sets in $C_S(q,n)$ that result in the same set after at most $t$ deletions.
\end{definition}
This tells us that after $t$ deletions, the codeword can still be uniquely recovered.
Therefore, we are dealing with a partial Steiner system with parameters $S(n-t, n, q)$.
The problem of maximizing the size of a set code correcting deletions, then corresponds to the problem of maximizing the number of blocks in the corresponding partial Steiner system.
There are several results on the size of partial Steiner systems, see in particular \cite{Rodl85, Kuzjurin95}.
However, these results give only probabilistic constructions and no decoding algorithms. Moreover, they mainly provide results in the case when the deletion correction capability $t$ is close to the length $n$ of the set code.

In the following, we explain the correspondence of set codes with binary constant-weight codes.
In Section \ref{sec:bcw_codes}, we will then exploit this correspondence to explicitly create a set code correcting $t$ deletions.

\subsection{Correspondence between Set Codes and Binary Constant-Weight Codes} \label{sec:set_bcw_codes}
A set of vectors in $\bF_2^q$ is called a \textit{binary constant-weight code} of length $q$ and weight $n$ if all vectors share the same Hamming weight $n$.
They have been studied for a long time, for example in \cite{BrouwerSSS90, GrahamS80}.

We can associate to a set code $C_S(q,n)\subseteq \binom{\Sigma_q}{n}$ a binary constant-weight code as follows.
Fix an ordering of the alphabet $\Sigma_q$ and take as length of the constant-weight code the alphabet size $q$.
For every $n$-subset in the set code, we associate to it a vector of length $q$ that contains a $1$ in every position corresponding to an alphabet symbol which is part of the $n$-subset, and a $0$ in every other position.
Notice that each vector constructed this way will contain exactly $n$ ones, so the resulting binary code has constant weight $n$.

Note that deleting an element in a codeword of the set code corresponds to substituting a $1$ with a $0$ in the corresponding position of the codeword of the binary constant-weight code. Since we are only concerned with set codes with respect to deletions and not insertions, we will only be interested in having a binary constant-weight code that can correct asymmetric errors from 1 to 0 and not the opposite.

The problem of maximizing the size of a set code $C_S(q,n)$ correcting $t$ deletions, then corresponds to the problem of maximizing the size of a binary constant-weight code of length $q$ and weight $n$ correcting $t$ asymmetric errors.

Note that the map that gives a binary constant-weight code from a set code and its inverse can both be computed in time $O(q)$, which is efficient.

\subsection{VT-Syndrome Binary Constant-Weight Codes} \label{sec:bcw_codes}
In this subsection, we present a generalization of the Varshamov-Tenengolts code proposed in \cite{varshamov1965code}, which could only correct a single error.
We do so by adapting the construction from \cite{dolecek2010towards} following \cite{sabary2024error}.
This is very similar to the constructions of the Graham-Sloane bounds in \cite{BrouwerSSS90} and \cite{GrahamS80}, and is related to sets with distinct subset sums.
We focus on the construction from \cite{dolecek2010towards}, since it also allows for efficient decoding in the case of asymmetric errors.

This binary constant-weight code correcting multiple asymmetric errors can then be turned into a set code correcting deletions using the correspondence in Section \ref{sec:set_bcw_codes}.

We define a generalization of the Varshamov-Tenengolts syndrome.
\begin{definition}
    Let $\bfx \in \{0, 1\}^q$ and $p$ be a prime with $p > q$.
    Then for an integer $t \geq 1$ we define the \textit{VT-syndrome vector} of $\bfx$ as
    \[
        \vts_t(\bfx) = \left(\sum_{i=1}^q i x_i \bmod p,\sum_{i=1}^q i^2 x_i \bmod p,\ldots,\sum_{i=1}^q i^t x_i \bmod p \right). 
    \]
\end{definition}
This enables us to define the following binary code.
\begin{definition}
    Let $p$ be a prime with $p > q$, $t \geq 1$ an integer, and $\bfa \in \bF_p^t$.
    We define a code using the VT-syndrome vector:
    \[
        C(q, n, p, t, \bfa) = \left\{\bfx \in \{0, 1\}^q : \wt(\bfx) = n, \vts_t(\bfx) = \bfa \bmod p\right\}.
    \]
\end{definition}

Notice that the code just defined is a binary constant-weight code of length $q$ and weight $n$. As shown in \cite{dolecek2010towards}, such a code can correct $t$ asymmetric flips from $1$ to $0$. The decoding process from \cite{dolecek2010towards} has linear complexity in $q$ and $t$ up to polylogarithmic factors.

We now want to give a lower bound on the code size achieved with this construction.
For different values of $\bfa$, the codes $C(q, n, p, t, \bfa)$ partition the set of all binary vectors of length $q$ and weight $n$.
Since there are in total $\binom{q}{n}$ binary vectors of length $q$ and weight $n$, and we are partitioning them into $p^t$ classes, we get that there exists at least one class with size at least
$$\frac{\binom{q}{n}}{p^{t}}.$$
From Bertrand's postulate, we can choose the prime $p$ such that $q < p \leq 2q$.
Hence, for any given $q,n,t$, there exists a code of size at least
\[
    \frac{\binom{q}{n}}{(2q)^{t}}.
\]

\section{Permutation Codes Correcting Deletions}\label{sec.perm_codes}
Permutation codes were first introduced in \cite{slepian1965permutation} for coding over channels with Gaussian noise.
They more recently attracted a lot of attention in the setting of deletion errors due to their applications in flash memories.
More details about permutation codes can be found in \cite{levenshtein1992perfect, farnoud2013error, chee2014breakpoint, chee2019burst}.
In this section, we provide some known results on permutation codes and their behavior under deletions. In particular, we analyze two different kinds of deletions: stable and unstable deletions. 

Recall that a \textit{permutation} $\sigma$ is a bijection $\sigma : [n]\to [n]$, where $[n]$ denotes the set $\{1,\dots,n\}$. We write permutations as sequences $\sigma=(\sigma_1,\ldots,\sigma_n)$ of length $n$, where the meaning is that for every $i\in[n]$, $\sigma_i:=\sigma(i)$.
For a given length $n$, the \textit{symmetric group} $\cS_n$ is the set of all permutations over $[n]$.

\begin{definition}
    A \textit{permutation code} of length $n$ is a subset of the symmetric group $\cS_n$.
\end{definition} 

To construct permutation codes with large deletion correcting capability and large size, we first need to investigate how permutations behave with respect to deletions.

\subsection{Stable Deletions}\label{sec:stable_del}
Let $\sigma=(\sigma_1,\ldots,\sigma_n) \in \cS_n$ be a permutation of length $n$. In the case of stable deletions, once a symbol is deleted, all other symbols remain the same. If the deletion happens in the $i$-th position, the new sequence is $(\sigma_1,\dots,\sigma_{i-1},\sigma_{i+1},\dots,\sigma_n)\in[n]^{n-1}$. Note that the new sequence no longer represents a permutation.
\begin{example}
    Let $\sigma = (2, 3, 1, 4, 5) \in \cS_5$ and suppose there is a stable deletion in the second position. The new sequence obtained is $\sigma'=(2, 1, 4, 5)$.
\end{example}
\begin{remark}\label{rk:SD}
    In the case of multiple simultaneous deletions, one can give an equivalent definition of the model as follows.
    Let $\sigma=(\sigma_1,\ldots,\sigma_n)\in\cS_n$ and let $I \subset [n]$ be the set of positions where a deletion occurs.
    For every given integer $k \in [n]$ define $k(I)=k-|\{i \in I: i <k\}|$.   
    If the permutation $\sigma$ suffers $t$ stable deletions at the positions in set $I$, the resulting sequence is $\sigma'=(\sigma'_1,\ldots,\sigma'_{n-t})$, where for all $k \in [n] \setminus I$ and $i=k(I)$ we define $\sigma'_{i}=\sigma_k$.
\end{remark}
A permutation code of length $n$ that can correct $t$ stable deletions is called a $t$-SD correcting permutation code and is denoted by $C_{SD}(n,t)$. We are interested in maximizing the size of these codes with respect to the parameters $n,t$.

Recently, Wang et al. \cite{WNCV24} used techniques from extremal graph theory to prove the existence of a $t$-SD correcting permutation code of length $n$ with size $\Omega_t\left(\frac{n! \log n}{n^{2t}}\right)$.
However, the proof is not constructive, and no way of designing permutation codes with such size is currently known.

In \cite{WNCV24}, the authors presented a $t$-SD correcting permutation code of length $n$ with size $|C_{SD}(n,t)| \geq \frac{n!}{(2n)^{3t-1}}$.
This is, to the best of our knowledge, the largest known $t$-SD correcting permutation code with an explicit construction and decoding algorithm.
However, the algorithm relies on a decoding algorithm for permutation codes in the Hamming metric.
Unfortunately, for the best known permutation codes in the Hamming metric there does not yet exist an efficient decoding algorithm.

We will then use the code from \cite{WNCV24} in Section \ref{sec:mult-free_codes}, in combination with a set code, to obtain a multiplicity-free code correcting deletions with a decoding algorithm.

\subsection{Unstable Deletions}
Let $\sigma=(\sigma_1,\ldots,\sigma_n)$ be a permutation in $\cS_n$.
In case of an unstable deletion, after one symbol is deleted, the values of the others also change.
What is preserved is the relative order of the symbols, but their value is adjusted to have all the elements of $[n-1]$ appear exactly once.
If $i\in[n]$ is the position where the deletion occurs, the new sequence is $\tilde\sigma=(\tilde\sigma_1, \dots, \tilde\sigma_{i-1}, \tilde\sigma_{i+1}, \dots, \tilde\sigma_n) \in[n-1]^{n-1}$, where for every index $j$ we have $\tilde\sigma_j=\sigma_j-\mathbbm{1}_{\sigma_j>\sigma_i}$. Notice that the new sequence $\tilde\sigma\in\cS_{n-1}$ is again a permutation in a different symmetric group.
Similarly, if multiple unstable deletions occur and $I \subset [n]$ is the set of positions of deleted symbols, the new sequence will be a permutation $\tilde\sigma \in \cS_{n-|I|}$.
\begin{example}
    Let $\sigma = (2, 3, 1, 4, 5) \in \cS_5$ and suppose there is an unstable deletion in position $i=2$. The new sequence obtained is $\tilde\sigma=(2, 1, 3, 4)\in\cS_4$. Starting again from the original permutation $\sigma$, suppose now that two unstable deletions occur in positions $I=\{2,4\}$. The resulting sequence is $\tilde\sigma=(2, 1, 3) \in \cS_3$.
\end{example}
\begin{remark}
    Let $\sigma=(\sigma_1,\ldots,\sigma_n)\in\cS_n$ and $I \subset [n]$ be the set of deleted positions.
    For $k \in [n]$, let $k(I)$ be as in Remark \ref{rk:SD}.
    Define $\sigma(I)=\{\sigma_i: i \in I \}$.
    If the permutation $\sigma$ suffers $t$ unstable deletions at all positions in $I$,
    we obtain the permutation $\tilde\sigma = (\tilde\sigma_1 ,\dots, \tilde\sigma_{n-t}) \in\cS_{n-t}$, where for all $k\in[n] \setminus I$ and $i = k( I )$ we define $\tilde\sigma_i = \sigma_k (\sigma ( I ))$.
    In the language from Section \ref{sec:mult-free_codes}, this corresponds to the induced permutation of the vector resulting from the stable deletion.
\end{remark}

A permutation code of length $n$ that can correct $t$ unstable deletions is called a $t$-UD correcting permutation code and is denoted by $C_{UD}(n,t)$.

This setting of unstable deletions arises every time that only the relative order of the symbols is relevant, but not their value, and has the advantage of leading to a new permutation.
Permutation codes correcting unstable deletions have been investigated in \cite{gabrys2015codes,chee2015permutation}. However, less is known about them in comparison with the stable deletion setting, as only codes correcting a single unstable deletion or a burst of unstable deletions are known.

\section{Multiplicity-free Codes Correcting Multiple Deletions}\label{sec:mult-free_codes}
In this section, we combine the results of the previous sections to construct a multiplicity-free $q$-ary code of length $n$ correcting $t$ deletions, where $t$ is given and $q>n$.

\subsection{Construction}
To design our code, we first decompose a multiplicity-free sequence into a set and a permutation. We will show that if, after multiple deletion errors, we recover the original set and permutation, then we are also able to recover the original multiplicity-free sequence.
\begin{definition}
    The \textit{induced set} of a multiplicity-free sequence $\bfx \in M_q^n$ is the set $$A(\bfx)=\{x_i : i\in[n]\}.$$
\end{definition}

For every $\bfx \in M_q^n$, we now want to define a permutation $\sigma\in\cS_n$ such that, if we rearrange the elements of $\bfx$ starting from the increasing order and following the order given by $\sigma$, the resulting multiplicity-free sequence is $\bfx$.
Formally, we do the following:
\begin{definition}
    Let $\bfx = (x_1, \ldots, x_n)$ and let $i_1,\dots,i_n\in[n]$ be the reordering of the indices such that $x_{i_1} < x_{i_2} < \ldots < x_{i_n}$. For every $j\in[n]$, we define $\sigma(i_j)=j$.
    We say that $\sigma$ is the \textit{induced permutation} of $\bfx$ and we write $P(\bfx)=\sigma$.
\end{definition}
The statement $\sigma(i_j)=j$ asserts that the $j$-th smallest element in the set $A(\bfx)$ appears in the sequence $\bfx$ in position $i_j$.

\begin{example}
    If $\bfx= (8,0,6,5,2)$, then we have $A(\bfx)=\{0,2,5,6,8\}$ and $P(\bfx) = (5,1,4,3,2)$.
\end{example}

\begin{lemma}
    For each $\bfx \in M_q^n$, let $A(\bfx) \in \binom{\Sigma_q}{n}$ and $P(\bfx) \in \cS_n$ be the induced set and the induced permutation of $\bfx$. The following map is bijective
    \begin{align*}
    \Phi:   M_q^n & \to \binom{\Sigma_q}{n} \times \cS_n\\
      \bfx    & \mapsto A(\bfx)    \times P(\bfx).
    \end{align*}
\end{lemma}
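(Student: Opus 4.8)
The plan is to prove bijectivity by exhibiting an explicit two-sided inverse of $\Phi$, which also serves as the natural reconstruction map used later in decoding. Given a pair $(A,\sigma)\in\binom{\Sigma_q}{n}\times\cS_n$, write the elements of $A$ in increasing order as $a_1<a_2<\cdots<a_n$, and define $\Psi(A,\sigma)=\bfx$ by setting $x_i=a_{\sigma(i)}$ for each $i\in[n]$; that is, the entry in position $i$ is the $\sigma(i)$-th smallest element of $A$. First I would check that $\Psi$ indeed lands in $M_q^n$: since $\sigma$ is a bijection on $[n]$, the indices $\sigma(1),\dots,\sigma(n)$ are pairwise distinct, so the values $a_{\sigma(i)}$ are pairwise distinct elements of $A\subseteq\Sigma_q$, whence $\bfx$ is multiplicity-free.

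Next I would verify the two composition identities. For $\Phi\circ\Psi=\mathrm{id}$, start from $(A,\sigma)$ and set $\bfx=\Psi(A,\sigma)$. The set of entries of $\bfx$ is $\{a_{\sigma(i)}:i\in[n]\}=\{a_1,\dots,a_n\}=A$ because $\sigma$ is onto, so $A(\bfx)=A$. Moreover the rank of $x_i=a_{\sigma(i)}$ among the sorted elements of $A(\bfx)=A$ is exactly $\sigma(i)$, which is precisely the relation defining the induced permutation, so $P(\bfx)=\sigma$. For $\Psi\circ\Phi=\mathrm{id}$, start from $\bfx\in M_q^n$, let $\sigma=P(\bfx)$ and $A=A(\bfx)$ with sorted elements $a_1<\cdots<a_n$. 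By definition of $P(\bfx)$, the value $\sigma(i)$ is the rank of $x_i$ in $A$, i.e. $x_i=a_{\sigma(i)}$, which is exactly the $i$-th entry of $\Psi(A,\sigma)$; hence $\Psi(A(\bfx),P(\bfx))=\bfx$.

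Alternatively, one could bypass $\Psi$ by a counting argument: both sides are finite with $|M_q^n|=\frac{q!}{(q-n)!}=\binom{q}{n}\,n!=\bigl|\binom{\Sigma_q}{n}\times\cS_n\bigr|$, so it would suffice to prove $\Phi$ injective, which follows from the observation that $\bfx$ is determined by its set of values together with the relative order of those values. I would nonetheless prefer the explicit-inverse route, since $\Psi$ is precisely the reconstruction procedure the decoder will invoke after separately recovering the set and the permutation. I do not expect a genuine obstacle here; the only real care needed is notational, namely to read the defining relation $\sigma(i_j)=j$ of the induced permutation as the statement that $\sigma(i)$ equals the rank of $x_i$ within $A(\bfx)$, and to keep the sorting of $A$ consistent on both sides of each identity.
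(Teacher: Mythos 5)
Your proposal is correct and follows essentially the same route as the paper: both exhibit the explicit inverse $\Psi(A,\sigma)=(a_{\sigma(1)},\dots,a_{\sigma(n)})$ (with $a_1<\cdots<a_n$ the sorted elements of $A$) and verify the two composition identities $\Phi\circ\Psi=\mathrm{id}$ and $\Psi\circ\Phi=\mathrm{id}$, resting on the observation that $\sigma(i)$ is the rank of $x_i$ in $A(\bfx)$. Your additional check that $\Psi$ lands in $M_q^n$ and the counting-argument alternative are fine but not needed beyond what the paper does.
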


\begin{proof}
    We show that the inverse map exists.
    We define $\Psi: \binom{\Sigma_q}{n} \times \cS_n \to M_q^n$ as follows.
    Let $A=\{a_1,\dots,a_n\}\in \binom{\Sigma_q}{n}$ be an $n$-set such that $a_1<\dots<a_n$ and let $\sigma=(\sigma_1,\dots,\sigma_n)\in \cS_n$ be a permutation. Define
    \[
        \Psi\big(\{a_1,\dots,a_n\},(\sigma_1, \dots, \sigma_n)\big) := (a_{\sigma_1}, \dots, a_{\sigma_n}).
    \]
    
    It is clear that
    \begin{align*}
        (\Phi \circ \Psi)\big(\{a_1,\dots,a_n\},&(\sigma_1, \dots, \sigma_n)\big) = \Phi\big((a_{\sigma(1)}, \dots, a_{\sigma(n)})\big)\\
        &= \big(\{a_1,\dots,a_n\},(\tau_1, \dots, \tau_n)\big)
    \end{align*} for the same $n$-set $A\in \binom{\Sigma_q}{n}$ and for some permutation $\tau\in \cS_n$.
    We now need to prove that the two permutations $\tau$ and $\sigma$ are the same.
    Note that, by the definition of the induced permutation map, $\tau(j) = \sigma(j)$ asserts that the $\sigma(j)$-th smallest element should appear in the $j$-th position of $(a_{\sigma(1)}, \dots, a_{\sigma(n)})$.
    Now, this is indeed true since at position $j$ in $(a_{\sigma(1)}, \dots, a_{\sigma(n)})$ there is the $\sigma(j)$-th smallest element.
    Hence $\tau = \sigma$ and this proves that $\Phi \circ \Psi = id_{\binom{\Sigma_q}{n}\times \cS_n}$.
    
    Next, we consider $$(\Psi \circ \Phi)\big((x_1, \ldots, x_n)\big) = \Psi \big(\{x_{i_1},\dots,x_{i_n}\}, (\sigma_1,\dots,\sigma_n) \big),$$
    where $x_{i_1}<\dots<x_{i_n}$ and $\sigma=P(\bfx)$.
    Then  $$\Psi\big(\{x_{i_1},\dots,x_{i_n}\}, (\sigma_1,\dots,\sigma_n) \big) = (x_{\sigma(i_1)},\dots,x_{\sigma(i_n)}).$$
    By definition of induced permutation, $\sigma(i_j)=j$ for every $j$. Hence, $(\Psi \circ \Phi)\big((x_1, \ldots, x_n)\big)=(x_1, \ldots, x_n)$ for every starting sequence $\bfx$ and $\Psi \circ \Phi = id_{M_q^n}$.
\end{proof}

The fact that $\Phi$ is bijective shows that, given a set $B \in \binom{\Sigma_q}{n}$ and a permutation $\pi \in \cS_n$, there is a unique sequence $\bfx \in M_q^n$ such that $A(\bfx)=B$ and $P(\bfx)=\pi$.

We can now present our first construction of a multiplicity-free code.
\begin{construction} \label{con1-UD}
    Let $C_S(q,n,t) \subseteq \binom{\Sigma_q}{n}$ be a set code of length $n$ correcting $t$ deletion errors and let $C_{UD}(n,t) \subseteq \cS_n$ be a permutation code of length $n$ correcting $t$ unstable deletions. Then $C_1(q,n,t) = \Phi^{-1} \big( C_S(q,n,t) \times C_{UD}(n,t) \big)  \subseteq M_q^n$ is a multiplicity-free code.
\end{construction}

\begin{theorem}
    The code $C_1(q,n,t)$ from Construction \ref{con1-UD} is a $q$-ary multiplicity-free code of length $n$ correcting $t$ deletions of size $|C_1(q,n,t)| = |C_S(q,n,t)|\cdot|C_{UD}(n,t)|$.
\end{theorem}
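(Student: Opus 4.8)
The plan is to verify the three assertions of the statement in turn, the last being the only substantial one. The multiplicity-free property is immediate, since $C_1(q,n,t) \subseteq M_q^n$ by construction. The size formula is likewise immediate from the previous lemma: as $\Phi$ is a bijection, so is its restriction furnishing $\Phi^{-1}$ on $C_S(q,n,t) \times C_{UD}(n,t)$, and hence $|C_1(q,n,t)| = |C_S(q,n,t) \times C_{UD}(n,t)| = |C_S(q,n,t)| \cdot |C_{UD}(n,t)|$. It remains to show that $C_1(q,n,t)$ corrects $t$ deletions, that is, that no two distinct codewords produce the same sequence after at most $t$ deletions.

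The key step I would isolate is that the decomposition $\bfx \mapsto (A(\bfx), P(\bfx))$ commutes with deletions. Let $\bfx \in M_q^n$ and let $\bfr$ be obtained from $\bfx$ by deleting the symbols at the positions of a set $I \subseteq [n]$ with $|I| \le t$. Then $\bfr$ is again multiplicity-free, and I claim that (i) $A(\bfr)$ arises from $A(\bfx)$ by exactly $|I|$ set-deletions, and (ii) $P(\bfr)$ arises from $P(\bfx)$ by $|I|$ unstable deletions at the positions $I$. Part (i) is clear, since $A(\bfr) = A(\bfx) \setminus \{x_i : i \in I\}$ and multiplicity-freeness ensures that precisely $|I|$ distinct symbols are removed. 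Part (ii) is the technical core and amounts to tracking ranks. Writing $\sigma = P(\bfx)$, the entry $\sigma_k$ is by definition the rank of $x_k$ among the symbols of $A(\bfx)$. Passing to the subsequence $\bfr$, the surviving symbol $x_k$ with $k \notin I$ now sits at position $k(I)$, and its rank drops by the number of deleted symbols of smaller value, namely $|\{i \in I : \sigma_i < \sigma_k\}|$. Hence $P(\bfr)_{k(I)} = \sigma_k - |\{i \in I : \sigma_i < \sigma_k\}| = \sigma_k(\sigma(I))$, which is exactly the formula defining the unstable deletion of $\sigma$ at the positions $I$. I expect this rank bookkeeping to be the only genuinely delicate part of the argument.

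Granting the commutation, the deletion-correction property follows directly. Suppose $\bfx, \bfy \in C_1(q,n,t)$ both yield the same sequence $\bfr$ after at most $t$ deletions. By part (i), $A(\bfr)$ is obtained from each of $A(\bfx), A(\bfy) \in C_S(q,n,t)$ by at most $t$ set-deletions; since $C_S(q,n,t)$ corrects $t$ deletions, $A(\bfx) = A(\bfy)$. By part (ii), $P(\bfr)$ is obtained from each of $P(\bfx), P(\bfy) \in C_{UD}(n,t)$ by at most $t$ unstable deletions; since $C_{UD}(n,t)$ corrects $t$ unstable deletions, $P(\bfx) = P(\bfy)$. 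As $\Phi$ is a bijection, equality of both components forces $\bfx = \bfy$, contradicting the assumption that the codewords are distinct. This establishes that $C_1(q,n,t)$ corrects $t$ deletions. The argument is moreover constructive and yields a decoder: from a received word $\bfr$ one extracts $A(\bfr)$ and $P(\bfr)$, decodes them separately with the decoders for $C_S(q,n,t)$ and $C_{UD}(n,t)$, and recovers the transmitted codeword as $\Phi^{-1}(A(\bfx), P(\bfx)) = \Psi(A(\bfx), P(\bfx))$.
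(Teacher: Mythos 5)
Your proposal is correct and follows essentially the same route as the paper: decompose into induced set and induced permutation, observe that deletions in the sequence correspond to set-deletions on $A(\bfx)$ and unstable deletions on $P(\bfx)$, and conclude via the bijectivity of $\Phi$. In fact you are slightly more thorough than the paper, which merely asserts that ``only the relative ordering matters,'' whereas you verify the rank bookkeeping $P(\bfr)_{k(I)} = \sigma_k - |\{i \in I : \sigma_i < \sigma_k\}|$ explicitly and match it to the unstable-deletion formula.
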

\begin{proof}
    The size of the code follows immediately from the fact that $\Phi$ is a bijection.
    All we have to prove is that the code corrects $t$ deletions.
    Given a sequence $\bfx \in C_1(q,n,t)$, let $\bfy \in M_q^{n-t}$ be a sequence obtained from $\bfx$ after $t$ deletions.
    The induced set $A(\bfy)$ is obtained from $A(\bfx)$ by deleting the $t$ elements that no longer appear in $\bfy$.
    Moreover, the induced permutation $P(\bfy) \in \cS_{n-t}$ can be obtained from $P(\bfx)$ after $t$ unstable deletions, as when computing the induced permutation only the relative ordering of the elements matters.
    After decoding separately the original induced set $A(\bfx)$ in the set code and the induced permutation $P(\bfx)$ in the permutation code, we can then apply the map $\Phi^{-1}$ and decode the original codeword $\bfx$.
    Hence, if we have a set code correcting $t$ deletions and a $t$-UD correcting permutation code, we can construct a multiplicity-free code correcting $t$ deletions.
\end{proof}

In Section \ref{sec:bcw_codes}, we presented an explicit construction of a binary constant-weight code of size at least $\frac{\binom{q}{n}}{(2 q)^t}$, which can be turned into a set code of the same size using the correspondence in Section \ref{sec:set_bcw_codes}.
However, there is a lack of knowledge on permutation codes correcting multiple unstable deletions, so Construction \ref{con1-UD} cannot be used for $t\geq 2$.

Fortunately, there are many good permutation codes correcting multiple stable deletions, as discussed in Section \ref{sec:stable_del}.
It is therefore desirable to design a $q$-ary multiplicity-free code based on a permutation code correcting stable deletions.
We do so in the following construction.

\begin{construction}\label{con2-SD}
    Let $C_S(q,n,t) \subseteq \binom{\Sigma_q}{n}$ be a set code of length $n$ correcting $t$ deletion errors and let $C_{SD}(n,t) \subseteq \cS_n$ be a permutation code of length $n$ correcting $t$ stable deletions. Then $C_2(q,n,t) = \Phi^{-1} \big( C_S(q,n,t) \times C_{SD}(n,t) \big)  \subseteq M_q^n$ is a multiplicity-free code.
\end{construction}
\begin{theorem} \label{thm.mult_free_stable_del}
    The code $C_2(q,n,t)$ from Construction \ref{con2-SD} is a $q$-ary multiplicity-free code of length $n$ correcting $t$ deletions of size $|C_2(q,n,t)| = |C_S(q,n,t)|\cdot|C_{SD}(n,t)|$.
\end{theorem}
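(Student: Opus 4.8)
The plan is to mirror the proof of the preceding theorem for Construction \ref{con1-UD}, but to be careful about the \emph{order} in which the two component decoders are invoked, since the induced permutation of a deleted sequence is naturally related to $P(\bfx)$ by \emph{unstable} deletions rather than stable ones. The size claim is immediate: since $\Phi$ is a bijection, $|C_2(q,n,t)| = |C_S(q,n,t) \times C_{SD}(n,t)| = |C_S(q,n,t)| \cdot |C_{SD}(n,t)|$. The real work is in showing that $t$ deletions can be corrected.

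First I would fix $\bfx \in C_2(q,n,t)$ and let $\bfy \in M_q^{n-t}$ be obtained from $\bfx$ by $t$ deletions. As before, the induced set $A(\bfy)$ is $A(\bfx)$ with the $t$ deleted symbols removed, so $A(\bfy)$ is a $t$-deletion of the set codeword $A(\bfx)$, and decoding $C_S(q,n,t)$ recovers $A(\bfx)$. The crucial point is that I would run the set decoder \emph{first}: comparing the recovered $A(\bfx)$ with the observed $A(\bfy)$ reveals exactly which $t$ symbols were deleted, and hence which $t$ values were deleted from the permutation $P(\bfx)$. Concretely, if $A(\bfx) = \{a_1 < \cdots < a_n\}$ and the deleted symbol is $a_m$, then by the definition of the induced permutation the value $m$ is precisely the value removed from $P(\bfx)$; write $J \subseteq [n]$ for the resulting set of $t$ deleted values.

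Next I would reconstruct the stable-deletion image of $P(\bfx)$. Writing $\sigma = P(\bfx)$ and letting $\sigma'$ be the sequence obtained by stably deleting $\sigma$ at the positions of the deleted symbols, $\sigma'$ consists of the values $[n] \setminus J$ in their original relative order. By the remark relating unstable to stable deletions, the observed induced permutation $P(\bfy)$ is exactly the induced permutation of $\sigma'$. Since I already know $J$, I know the sorted list $b_1 < \cdots < b_{n-t}$ of surviving values $[n] \setminus J$, and I can invert the induced-permutation operation by setting $\sigma'_k = b_{P(\bfy)_k}$, thereby recovering the genuine stable-deletion word $\sigma'$. Feeding $\sigma'$ into the $t$-SD decoder for $C_{SD}(n,t)$ then returns $\sigma = P(\bfx)$, and applying $\Phi^{-1}$ to the pair $(A(\bfx), P(\bfx))$ recovers $\bfx$.

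The main obstacle, and the only genuine point of departure from the unstable case, is the reconstruction in the previous paragraph: a stable-deletion decoder cannot be applied directly to $P(\bfy)$, because $P(\bfy)$ is an unstable deletion of $P(\bfx)$, not a stable one. I expect the argument to hinge entirely on decoding the set code first so that $J$ is available, which is precisely what lets me rebuild the stable-deletion word $\sigma'$ from the relative order encoded in $P(\bfy)$. Verifying that the assignment $\sigma'_k = b_{P(\bfy)_k}$ indeed reproduces the stable-deletion word of $\sigma$ at the deleted positions is the one lemma-level identity I would need to check in full detail.
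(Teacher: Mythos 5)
Your proposal is correct and follows essentially the same route as the paper: decode the set code first, use the recovered set $A(\bfx)$ to translate $\bfy$ into a stable-deletion image of $P(\bfx)$, decode that with the $t$-SD permutation decoder, and finish with $\Phi^{-1}$. The only cosmetic difference is that the paper builds the stable-deletion word $\tau$ directly by assigning to each $\bfy_j$ its rank in the sorted $A(\bfx)$, whereas you route through $P(\bfy)$ and the deleted-rank set $J$ via $\sigma'_k = b_{P(\bfy)_k}$ --- these two assignments coincide, so your ``lemma-level identity'' does hold.
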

\begin{proof}
    Recall that $\Phi$ is a bijection and thus $|C_2(q,n,t)| = |C_{S}(q,n,t)|\cdot|C_{SD}(n,t)|$.
    Now we prove that it can correct up to $t$ deletions.
    To simplify notation, we only consider the case of exactly $t$ deletions, but the proof stays the same for less than $t$ deletions.
    Let $\bfx \in C_2(q, n, t)$ and let $\bfy \in M_q^{n-t}$ be a sequence obtained from $\bfx$ after $t$ deletions.
    Again, since $A(\bfx)\in C_S(q,n,t)$ and the induced set $A(\bfy)$ is obtained from $A(\bfx)$ after $t$ deletions, we can recover the set $A(\bfx)$.
    
    Having recovered the induced set, we can do the following.
    We order the elements in $A(\bfx)$ as follows:
    \[
        x_{i_1} < x_{i_2} < \ldots < x_{i_n}.
    \]
    Hence, by definition of the induced permutation $P(\bfx)=\sigma$, for every index we have $\sigma(i_l) = l$.
    Then we define a vector $\tau$ such that, for every $j\in[n-t]$, if $\bfy_j = x_{i_l}$ we set $\tau_j = l$. Note that this is well-defined as all elements of $\bfy$ appear in $\bfx$ exactly once.
    Furthermore, observe that $\tau$ is now the vector obtained from the permutation $\sigma$ after the stable deletion of the $t$ entries in the same positions where there were deletions in $\bfx$.
    Since $\sigma=P(\bfx) \in C_{SD}(n,t)$, from the vector $\tau$ we can recover the original induced permutation $P(\bfx)$.
    From $A(\bfx)$ and $P(\bfx)$ we now recover $\bfx$ using $\Phi^{-1}$.
\end{proof}

The proof of Theorem \ref{thm.mult_free_stable_del} also gives an algorithm to decode, assuming we have decoding algorithms for the set code and the permutation code.
The main idea is to first use the decoder of the code $C_S(q,n,t)$ to recover the set $A(\bfx)$.
Once we have recovered $A(\bfx)$, we can use it to obtain the sequence $\tau$, which corresponds to $t$ stable deletions from $\sigma=P(\bfx)$.
Using the decoder of the code $C_{SD}(n,t)$, we can now recover the permutation $P(\bfx)$. Finally, from $A(\bfx)$ and $P(\bfx)$, we recover the original codeword $\bfx$.

\begin{example}
    Consider the $2$-SD correcting permutation code $$C_{SD}(5, 2) = \{(1,2,3,4,5), (4,5,2,3,1) \}$$ and the set code $$C_S(8, 5, 2) = \{\{0,1,2,3,4\}, \{3,4,5,6,7\} \}.$$
    Then, using Construction \ref{con2-SD}, we get the multiplicity-free code
    \begin{align*}
        C_2(q, n, t) = \{&(0,1,2,3,4), (3,4,1,2,0),\\
        &(3,4,5,6,7), (6,7,4,5,3) \}.
    \end{align*}
    If there are two deletions in positions $\{2, 4\}$ in the codeword $\bfx = (6,7,4,5,3)$, we get $\bfy = (6,4,3)$.
    The induced set is $A(\bfy) = \{3, 4, 6\}$, from which we recover the original set $A(\bfx) = \{3,4,5,6,7\}$.
    Having recovered the set, we can get that the stable deletion in $P(\bfx)$ has to be $(4, 2, 1)$.
    Correcting the stable deletion yields the permutation $P(\bfx) = (4,5,2,3,1)$ and hence we get back $\bfx$ from $A(\bfx)$ and $P(\bfx)$.
\end{example}

As we saw in Section \ref{sec:bcw_codes}, there is an efficient decoding algorithm for the set codes.
Moreover, there is a decoding algorithm for the construction in \cite{WNCV24} under the assumption that there is a decoding algorithm in the Hamming metric as discussed in Section \ref{sec:stable_del}.

\subsection{Analysis of the Code Size and Redundancy}\label{subsec:analysis}
In the previous section, we constructed codes correcting deletions using set codes and permutation codes.
Since Construction \ref{con1-UD} with unstable deletions can only be used in the case $t=1$, we here analyze the size of the code from Construction \ref{con2-SD} with stable deletions.

From Section \ref{sec:stable_del} we know there exists a permutation code $C_{SD}(n,t)$ of size at least $|C_{SD}(n,t)| \geq \frac{n!}{(2n)^{3t-1}}$ and from Section \ref{sec:bcw_codes} and Section \ref{sec:set_bcw_codes} that there exists a set code $C_{S}(q,n,t)$ of size at least $|C_S(q,n,t)| \geq \frac{\binom{q}{n}} {(2q)^t}$. Hence, via Theorem \ref{thm.mult_free_stable_del} we get a multiplicity-free code correcting $t$ deletions of size at least
\[
    |C_2(q, n, t)| \geq \frac{n!}{(2n)^{3t-1}} \frac{\binom{q}{n}} {(2q)^t} = \left(\prod_{i=0}^{n-1} (q-i) \right) \frac{1}{(2n)^{3t-1}(2q)^t}.
\]
Taking the logarithm with base 2 yields
\begin{align*}
    \left(\sum_{i=0}^{n-1} \log(q-i) \right) - t \log(q) - (3t-1) \log(n) - (4t-1).
\end{align*}
Now, assuming $q > n^{2+\varepsilon}$, we get that for $i \in \{0, \ldots, n-1\}$
\[
    q - i \geq q-n \geq q - \frac{q}{n^{1 + \varepsilon}}
    = q \left(1 - \frac{1}{n^{1 + \varepsilon}} \right).
\]
Thus, $\left(\sum_{i=0}^{n-1} \log(q-i) \right) \geq n \log(q) - O\left(\frac{1}{n^{\varepsilon}} \right).$

The redundancy of the code $C_2(q, n, t)$ is defined as
\[
    \log(|\Sigma_q^n|) - \log(|C_2(q, n, t)|)
    = n \log(q) - \log(|C_2(q, n, t)|) .
\]
Therefore, the redundancy is bounded by
\[
    t \log(q) + (3t - 1) \log(n) + (4t-1) + O\left(\frac{1}{n^{\varepsilon}} \right).
\]
Actually, both the permutation code and the set code constructions use the smallest primes larger than a given number.
If instead of using Bertrand's postulate we use the prime number theorem, the term $4t-1$ can be replaced by $\delta t$ for any $\delta > 0$ for $n$ large enough using the very same argument.

Note that also the term $(3t-1) \log(n)$ could be improved to $2t \log(n)$ using the non-constructive results from \cite{WNCV24} mentioned in Section \ref{sec:stable_del}.

The Singleton bound \cite{liu2022bounds, HaeuplerS21} tells us that
\[
    \log(|C|) \leq (n - t) \log(q).
\]
Our code has size
\begin{align*}
    n \log(q) - \left(t \log(q) + (3t - 1) \log(n) + \delta t + O\left(\frac{1}{n^{\varepsilon}} \right)\right).
\end{align*}
We show that, by increasing the alphabet size, we can make the code size arbitrarily close to the Singleton bound.
Let $t$ be fixed and $q = n^\alpha$ for some $\alpha>2$.
Then our code has size
\[
    \left( n - t - \frac{3t-1}{\alpha} + o(1) \right) \log(q).
\]
If we allow a deviation of $\eta > 0$ from the Singleton bound, i.e., code size $(n-t-\eta) \log(q)$, then choosing $\alpha > \frac{3t-1}{\eta}$ gives the result.
Thus, our code asymptotically comes arbitrarily close to the Singleton bound if we let $\alpha$ be large enough.

\section{Conclusion and Discussion}
In this paper, we provided a new construction of non-binary deletion correcting codes and their decoding for $q>n$.
For large length $n$, alphabet size $q > n^{2 + \varepsilon}$ with $\varepsilon > 0$, and error correction capability $t$, our code has redundancy at most $t \log(q) + (3t - 1) \log(n) + \delta t + O\left(\frac{1}{n^{\varepsilon}} \right)$ for arbitrary $\delta > 0$.

In the literature, there are already several known results on $q$-ary codes correcting $t>1$ deletions and on their redundancy.
For example, codes in \cite{sima2020optimalqary} require $30t \log q$ bits of redundancy, codes in \cite{hagiwara2024pointer} require $5t \log q$ bits of redundancy, and codes in \cite{HaeuplerS21} require at least $t\log q + \Theta(n)$ bits of redundancy.
Hence, when $q>n^{2+\epsilon}$ for $\epsilon>0$ and $t$ is constant, our codes have smaller redundancy.
To the best of our knowledge, in this setting our codes have the biggest size among all known $q$-ary codes of length $n$ correcting $t$ deletions.

Moreover, for alphabet size $q = n^\alpha$ with $\alpha$ large enough, the size of our code is asymptotically arbitrarily close to the Singleton bound.

Important open problems include finding efficient encoding and message recovery algorithms for our code and the decoding of permutation codes in the Hamming metric.

Further possible research directions could be to study the case where also insertion errors can occur and to generalize the construction to sequences that are not necessarily multiplicity-free.

\section*{Acknowledgment}
We would like to thank Joachim Rosenthal and Yeow Meng Chee for the useful inputs, discussions and comments.

Michael Schaller and Beatrice Toesca are supported by the Swiss National Foundation through grant no.\ 212865.

\bibliographystyle{abbrv}
\bibliography{A_New_Construction_of_Non_Binary_Deletion_Correcting_Codes_and_their_Decoding_arxiv}

\end{document}